\newtheorem{theorem}{Theorem}
\newtheorem{remark}{Remark}
\title{Sparse Array Design for MIMO Radar in Multipath Scenarios}
\name{Xuchen Li\textsuperscript{1}, Ronghao Lin\textsuperscript{1}, Hing Cheung So\textsuperscript{2} 
\thanks{This work was supported in part by the NSFC under award
number 62271461.}}
\address{\textsuperscript{1}Department of EEIS, University of Science and Technology of China, Hefei, Anhui, China\\ \textsuperscript{2}Department of EE, City University of Hong Kong, Hong Kong, China \\ \{xuchenli@mail.ustc.edu.cn, rhlin1998@ustc.edu.cn, hcso@ee.cityu.edu.hk\}}
\begin{document}
%\ninept
% \fontdimen4\font=2pt
%

\maketitle
\begin{abstract}
% Previous research has focused on optimizing the performance of beamforming of the multiple-input multiple-output (MIMO) virtual array, such as angular resolution, peak sidelobe level and directivity factor.
Sparse array designs have focused mostly on angular resolution, peak sidelobe level and directivity factor of virtual arrays for multiple-input multiple-output (MIMO) radar. 
The notion of the MIMO radar virtual array is based on the direct path assumption in that the direction-of-departure (DOD) and direction-of-arrival (DOA) of the targets are equal. However, the DOD and DOA of targets in multipath scenarios are likely to be very different. The identification of multipath targets requires DOD-DOA imaging using the the transmit and receive arrays, not the virtual array. To improve the imaging of both direct path and multipath targets, 
% we consider the sparse linear array (SLA) designs for MIMO radar in multipath scenarios.
we introduce several new criteria for MIMO radar sparse linear array (SLA) designs for multipath scenarios. Under the new criteria, we adopt a cyclic optimization strategy under a coordinate descent framework to design the MIMO SLAs. We present several numerical examples to demonstrate the effectiveness of the proposed approaches.
\end{abstract}
\begin{keywords}
 Array design, multiple-input multiple-output radar, sparse linear array, multipath, coordinate descent
\end{keywords}
\section{Introduction}
\label{sec:intro}
Millimeter-wave radar's excellent performance in challenging weather and low-lighting conditions makes it an indispensable sensor for autonomous driving \cite{Engels2017Magazine,Patole2017Magazine,Allamd2019Magazine,Sun2020Magazine}. Multiple-input multiple-output (MIMO) radar can utilize waveform diversity to synthesize an $MN$-element virtual uniform linear array (ULA) using $M$ transmit and $N$ receive antennas \cite{Li2007Magazine,li2008mimo,garcia2022cramer}. To achieve high angular resolution, a large aperture array is required.
% The transmit/receive array configuration mainly involves the stacked mode and the interleaved mode \ref{}. The stacked mode has a transmitting element spacing of $0.5N \lambda$ and a receiving element spacing of $0.5 \lambda$. The interleaved mode has a transmitting element spacing of $0.5\lambda$ and a receiving element spacing of $ 0.5M\lambda$. 
However, even with the MIMO technology, attaining a large aperture virtual ULA can still be difficult. To further reduce hardware cost and energy consumption, one of the most effective techniques is to use sparse linear arrays (SLAs) in lieu of ULAs \cite{Sun2020ICASSP}. 
% \begin{figure}[!th]\label{fig1}
% % \vspace{-0.4cm}
% \centerline{\includegraphics[width=1\linewidth]{image/array_configuration2.eps}}
% \caption{multipath propagation}
% \end{figure}
% \begin{figure}[htbp]\label{fig1}
% \centering
% \includegraphics[scale=0.14]{image/array_configuration2.eps}
% \caption{Different MIMO radar array configuration}
% \label{figure}
% \end{figure}
% \par The design of a sparse array in MIMO radar typically involves deploying a reduced number of transmit and receive elements to construct a sparse virtual array with a comparable aperture to that of a ULA.  
\par The concept of sparse array design for MIMO radar typically involves deploying a reduced number of transmit and receive elements to construct a virtual SLA with a comparable aperture to that of a ULA. The array response of a sparse virtual array can vary significantly based on the placement of the transmit and receive elements, and some configurations may lead to high sidelobe or even grating lobes. Therefore, current approach focuses on selecting the positions of both transmit and receive elements to achieve a low peak sidelobe level (PSL) of the virtual SLA beampattern
\cite{Pieraccini2020radar,Liang2023TAP,Lin2022SP,genetic2017, Mateos2019EuRAD,He2022EuCAP}. 
% Since there is no closed-form solution to the corresponding optimization problem, heuristic algorithms are commonly employed in the SLA design, such as particle swarm optimization, genetic algorithm and simulated annealing \cite{Haupt1994TAP, genetic2017, Mateos2019EuRAD,He2022EuCAP}. 
However, these MIMO virtual SLA design methods are all for direct path targets with each target having equal direction-of-departure (DOD) and DOA. In multipath scenarios, as illustrated in Fig. \ref{fig1}, the DOD $\theta_{\mathrm{t},0}$ and DOA $\theta_{\mathrm{r},0}$ are very different. Then, the notion of virtual array falls apart \cite{Engels2017EuRAD}. To identify and remove multipath targets, the two-dimensional beamforming of the transmit and receive arrays is needed, referred to DOA versus DOD imaging \cite{Yunda2022VTC}. Therefore, it is necessary to develop new criteria to design transmit and receive SLAs for MIMO radar. To the best of our knowledge, this type of SLA design has not been considered in the literature. 
\par In this paper, we first introduce several new optimization criteria for the SLA designs for MIMO radar in multipath scenarios. Since there is no closed-form
solution to the corresponding optimization problem, we then adopt a cyclic algorithm based on the coordinate descent (CD) framework for MIMO SLA designs. Numerical examples are provided to illustrate the effectiveness of the SLA designs. 
\par $Notation$: We use bold lowercase letters to denote vectors. $(\cdot)^T$ and $(\cdot)^H$ represent the transpose and conjugate transpose operators, respectively. $\otimes$ denotes the Kronecker product. $\{ \cdot \}$ represents a set of elements from a vector, and $| \{ \cdot \}|$ denotes the cardinality of the set.
% $\lceil \cdot \rceil$ is the ceiling operator.

\section{Problem Formulation}
% \vspace{-0.3cm}
\label{sec:format}
% Consider a MIMO radar system with colocated antennas, where the linear transmit and receive arrays consist of $M$ and $N$ elements, respectively.
Consider a  MIMO radar with a linear transmit array consisting of $M$ antennas and a linear receive array consisting of $N$ antennas. The transmit and receive arrays are positioned in a horizontal parallel configuration. We allow the inter-element spacing to be integer multiples of $\frac{\lambda}{2}$, where $\lambda$ represents the wavelength. The locations of the transmit and receive antennas normalized by the $\frac{\lambda}{2}$ are denoted, respectively, as: 
\begin{equation}
\begin{split}
     &\mathbf{x}_{\mathrm{t}} = \left[ x_{\mathrm{t},1}, x_{\mathrm{t},2},\cdots, x_{\mathrm{t},M} \right]^T, \\
     &\mathbf{x}_{\mathrm{r}} = \left[ x_{\mathrm{r},1}, x_{\mathrm{r},2},\cdots, x_{\mathrm{r},N} \right]^T.
\end{split}   
\end{equation}
% and 
Note that the entries of $\mathbf{x}_{\mathrm{t}}$ and $\mathbf{x}_{\mathrm{r}}$ are all integers.  To simplify the subsequent analysis, we let $x_{\mathrm{t},1}=0$ and $x_{\mathrm{r},1}=0$. The steering vectors of the transmit and receive arrays can be expressed, respectively, as:
\begin{equation}\label{steer}
\begin{split}
    &\mathbf{a}_{\mathrm{t}}(\theta_{\mathrm{t}}) = \left[ 1,e^{j\pi x_{\mathrm{t},2}\sin\theta_{\mathrm{t}}},\cdots,e^{j\pi x_{\mathrm{t},M}\sin\theta_{\mathrm{t}}} \right]^{T}, \\
    & \mathbf{a}_{\mathrm{r}}(\theta_{\mathrm{r}}) = \left[1,e^{j\pi x_{\mathrm{r},2}\sin\theta_{\mathrm{r}}},\cdots,e^{j\pi x_{\mathrm{r},N}\sin\theta_{\mathrm{r}}} \right]^{T},\\
\end{split}
\end{equation}
where $\theta_{\mathrm{t}}$ and $\theta_{\mathrm{r}}$ are DOD and DOA, respectively. Consider a particular target with DOD and DOA, denoted by $\theta_{\mathrm{t},0}$ and $\theta_{\mathrm{r},0}$, respectively, impinging on the receive array. 
% are transmit and receive steering vectors, respectively. The normalized linear transmit array beampattern, for the main beam pointing direction $\theta_{\mathrm{t},0}$, is defined as
% \begin{equation}\label{dbf}
%     P_{\mathrm{t}}(\theta_{\mathrm{t}}) = \frac{1}{M}| \mathbf{a}_{\mathrm{t}}^{H}(\theta_{\mathrm{t}}) \cdot \mathbf{a}_{\mathrm{t}}(\theta_{\mathrm{t},0}) |,\\
% \end{equation}
% and the normalized linear receive array beampattern, for the main beam pointing direction $\theta_{\mathrm{r},0}$, is expressed as
% \begin{equation}
%      P_{\mathrm{r}}(\theta_{\mathrm{r}}) = \frac{1}{N}| \mathbf{a}_{\mathrm{r}}^{H}(\theta_{\mathrm{r}}) \cdot \mathbf{a}_{\mathrm{r}}(\theta_{\mathrm{r},0}) |,
% \end{equation}
% when DOA = DOD, i.e., $\theta_t = \theta_r =\theta$,  the beampattern of the virtual array can 
The standard two-dimensional the DOA versus DOD image of the target has the form
\begin{equation}\label{spatial}
\footnotesize
    P\left( \theta_{\mathrm{t}},\theta_{\mathrm{r}},\theta_{\mathrm{t},0},\theta_{\mathrm{r},0} \right)= \frac{1}{MN} \big| \big(\mathbf{a}_{\mathrm{t}}(\theta_{\mathrm{t}}) \otimes \mathbf{a}_{\mathrm{r}}(\theta_{\mathrm{r}})\big)^{H}
    \big(\mathbf{a}_{\mathrm{t}}(\theta_{\mathrm{t,0}}) \otimes \mathbf{a}_{\mathrm{r}}(\theta_{\mathrm{r,0}})\big) \big|.
\end{equation}
Because the inter-element spacing of the arrays are integer multiples of $\frac{\lambda}{2}$, the imaging possesses the shift invariant property \cite{Lin2022SP}, which means that as $\theta_{\mathrm{t},0}$ and $\theta_{\mathrm{r},0}$ vary from $-90^{\circ}$ to $90^{\circ}$, the DOA versus DOD image experiences only cyclic shifts, while the characteristics of the sidelobes remain unchanged. Therefore, we focus on optimizing the arrays for the case of $\theta_{\mathrm{t},0} = 0^{\circ}$ and $\theta_{\mathrm{r},0} = 0^{\circ}$. Note that $\mathbf{a}_{\mathrm{t}}(\theta_{\mathrm{t}}) \otimes \mathbf{a}_{\mathrm{r}}(\theta_{\mathrm{r}}) $ degenerates into a MIMO virtual array steering vector when $\theta_{\mathrm{t}} = \theta_{\mathrm{r}}$. Let $\mathbf{x}_{\mathrm{v}} = [x_{\mathrm{v},1},x_{\mathrm{v},2},\cdots,x_{\mathrm{v},L}]^{T}$ represent the locations of the elements in the virtual array. We have
\begin{figure}[t]
\centering
\includegraphics[scale=0.8]{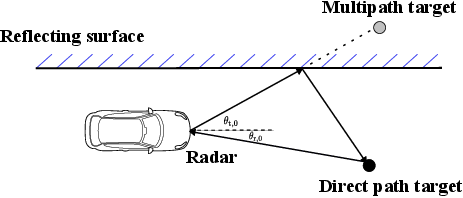}
\caption{A multipath scenario.}
\label{fig1}
% \vspace{-0.3cm}
\end{figure}
\begin{equation}
      \{ \mathbf{x}_{\mathrm{v}} \} = \{ \mathbf{x}_{\mathrm{t}} \} + \{ \mathbf{x}_{\mathrm{r}} \}, \\
\end{equation}
where the elements in $ \{ \mathbf{x}_{\mathrm{v}} \}$ are formed by adding the corresponding elements from \{$\mathbf{x}_{\mathrm{t}}\}$ and $\{ \mathbf{x}_{\mathrm{r}} \}$, and $|\{ \mathbf{x}_{\mathrm{v}} \}| = L$ denotes the number of unique elements in the MIMO virtual array. Let $\mathbf{w}_{\mathrm{v}} = [w_{1},w_{2},\cdots,w_{L}]^{T}$
denote the corresponding weight vector. For example, $w_{2}=2$ implies that there are two virtual elements at $x_{\mathrm{v},2}$. The normalized virtual array beampattern can be expressed as
\begin{equation}
    P_{\mathrm{v}} \left( \theta,0^{\circ} \right) = \frac{1}{MN} |\mathbf{w}^{T}\mathbf{a}_{\mathrm{v}}(\theta) |,
\end{equation}
where $\mathbf{a}_{\mathrm{v}}(\theta) = [e^{j \pi x_{\mathrm{v},1}},\cdots,e^{ j \pi x_{\mathrm{v},L}}]^{T} $ is the steering vector of the resulting virtual array. Note that the virtual array beampattern is important for the high resolution spacial imaging of direct path targets. When only optimizing the virtual array beampattern for SLA designs, it is desirable to ensure non-repetition of the virtual array elements, i.e., $L = MN$, and $\mathbf{w} = \mathbf{1}^{T}$. A common strategy involves setting the minimum inter-element spacing of the transmit array to match the entire aperture of the receive array \cite{Hamza2021ICASSP, Cohen2018ICASSP, Sun2020ICASSP}. However, this results in grating lobes in the DOD dimension of the two-dimensional DOA versus DOD image for the imaging of multipath targets. To simultaneously ensure the performance of the DOA versus DOD imaging for multipath targets and the virtual array spatial imaging for direct path targets, we consider the following array design mertics: resolutions of DOD and DOA, PSL of DOA versus DOD image, directivity factor (DF) of the virtual array beamformer. Accordingly, we consider the optimization criteria as follows:
\begin{itemize}
    \item We constrain the DOD resolution, $\delta_{\mathrm{t}}$, and the DOA resolution, $\delta_{\mathrm{r}}$, within some desired values of $\delta_1$ and $\delta_2$, respectively. 
    \item We minimize the the peak sidelobe level (PSL) of the DOA versus DOD image.
    \item We limit the number of distinct elements in the MIMO virtual array such that $ L > \sigma MN$,  where $\sigma \leq 1$ is an adjustable parameter.
\end{itemize}
% Typically, we constraint the angular resolution of both DOA and DOD, aiming to minimize the peak sidelobe levels (PSL) of the spatial spectrum as
The first criterion ensures the resolutions of DOA and DOD, while the second one indicates that the PSL of DOA versus DOD imaging serves as the objective function. The last criterion implies that we allow for partial repetition of the virtual array elements. According to following theorem, the presence of duplicated elements in the virtual array leads to DF degradations, and is related to $\sigma$.
\begin{theorem}\label{Theorem1}
     % For a SLA with a total of $MN$ elements,when the locations of all $MN$ elements are distinct, 
     The DF of the virtual array is maximized at $10 \log_{10}(MN)$ if and only if $L=MN$.  When $L > \sigma MN$, the DF loss does not exceed $10\log_{10}(\sigma+(1-\sigma)^2(MN))$.
\end{theorem}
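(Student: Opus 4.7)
The plan is to derive a closed form for the DF in terms of the weight vector $\mathbf{w}_{\mathrm{v}}$ and then extremize it subject to the constraints on $L$. Using the standard definition $\mathrm{DF}=|P_{\mathrm{v}}(0^{\circ},0^{\circ})|^{2}\big/\bigl(\tfrac{1}{2}\int_{-1}^{1}|P_{\mathrm{v}}(u,0^{\circ})|^{2}\,du\bigr)$ with $u=\sin\theta$, the numerator equals $1$ by the normalization in the paper. Expanding $|\mathbf{w}^{T}\mathbf{a}_{\mathrm{v}}(u)|^{2}=\sum_{i,j}w_{i}w_{j}\,e^{j\pi(x_{\mathrm{v},i}-x_{\mathrm{v},j})u}$ and integrating term-by-term, every cross term $(i\neq j)$ vanishes because $x_{\mathrm{v},i}-x_{\mathrm{v},j}$ is a nonzero integer and $\int_{-1}^{1}e^{j\pi ku}\,du=2\,\mathrm{sinc}(\pi k)=0$; only the diagonal survives, yielding
\[
\mathrm{DF}=\frac{(MN)^{2}}{\sum_{i=1}^{L}w_{i}^{2}}.
\]
For the first claim, Cauchy-Schwarz gives $\sum_{i}w_{i}^{2}\geq(\sum_{i}w_{i})^{2}/L=(MN)^{2}/L\geq MN$, using $L\leq MN$; both inequalities become equalities precisely when $L=MN$, which combined with $\sum_{i}w_{i}=MN$ forces every $w_{i}=1$. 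Hence $\mathrm{DF}\leq MN$, with equality iff $L=MN$, i.e., $10\log_{10}(MN)$ in dB.

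For the second claim, I would upper-bound $\sum_{i}w_{i}^{2}$ over the feasible polytope $\{w_{i}\geq 1,\;\sum_{i}w_{i}=MN\}$. Since $\sum w_{i}^{2}$ is strictly convex, its maximum is attained at a vertex where one coordinate equals $MN-L+1$ and the other $L-1$ coordinates equal $1$; equivalently, a unit-mass exchange between any two weights $w_{i},w_{j}\geq 2$ that moves mass from the smaller to the larger strictly increases the sum of squares, so the optimum concentrates all excess at a single position. Hence $\sum w_{i}^{2}\leq(MN-L+1)^{2}+(L-1)$, which is a decreasing function of $L$ on the relevant range, so the constraint $L>\sigma MN$ (at its tightest, $L=\sigma MN+1$) gives $\sum w_{i}^{2}\leq(1-\sigma)^{2}(MN)^{2}+\sigma MN$. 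Substituting into $MN/\mathrm{DF}=\sum w_{i}^{2}/MN$ and taking $10\log_{10}$ then produces the stated loss bound $10\log_{10}(\sigma+(1-\sigma)^{2}MN)$.

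The main obstacle I anticipate is the constrained maximization in the previous step: justifying that the convex extremum lies at the claimed ``spike'' configuration (handled by the exchange argument above), and correctly interpreting the constraint in the non-integer case (the bound is tight when $\sigma MN$ is an integer so that $L=\sigma MN+1$ is admissible; otherwise one takes the continuous relaxation at the boundary $L=\sigma MN+1$, which still yields the same formula). The closed-form DF expression and the Cauchy-Schwarz step are routine in comparison.
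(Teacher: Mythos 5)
Your proposal is correct and follows essentially the same route as the paper's appendix: reduce the DF to $(MN)^2/\sum_l w_l^2$ and bound the sum of squares by the worst case in which all excess weight is concentrated at a single virtual position, giving $\sigma MN+(1-\sigma)^2(MN)^2$. You merely supply details the paper outsources or asserts — the term-by-term integration that kills the cross terms (the paper cites Lin et al.\ for this) and the explicit vertex/exchange argument justifying the spike configuration — so the two proofs are the same in substance.
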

\begin{proof}
    See Appendix.
\end{proof}
\begin{remark}\label{remark1}
    Note that the upper bound on the maximum loss of DF in Theorem 1 assumes that all repeated array elements are positioned at the same location. In practical cases, the actual loss is often much smaller than this upper bound.
\end{remark}
In summary, we formulate the optimization problem of the SLA design for MIMO radar in multipath scenarios as follows:
\begin{align}\label{opt}
&\min_{\mathbf{x}_{\mathrm{t}},\mathbf{x}_{\mathrm{r}}} \max_{\theta_{\mathrm{t}},\theta_{\mathrm{r}}\in \Omega} P(\theta_{\mathrm{t}},\theta_{\mathrm{r}},0^{\circ},0^{\circ}) \nonumber 
\\
&\quad \quad {\rm s.t.} \quad \delta_{\mathrm{t}} < \delta_{1}, \delta_{\mathrm{r}} < \delta_{2}, \nonumber
\\
& \  \qquad \qquad L > \sigma MN, \nonumber
\\
& \  \qquad \qquad  \mathbf{x}_{\mathrm{t}} \in \mathbb{N}^{M \times 1} ,\mathbf{x}_{\mathrm{r}} \in \mathbb{N}^{N \times 1},
\end{align}
where $\Omega$ represents the two dimensional sidelobe region of the spatial spectrum. The optimization problem in (\ref{opt}) is non-convex, and it is NP-hard in general and difficult to solve.

\section{Sparse Linear Array Design}
\label{sec:pagestyle}

\begin{algorithm}[t]
    \caption{Cyclic Algorithm for MIMO SLA Design}
    \begin{algorithmic}[1]
        \REQUIRE{$\delta_1,\delta_2,\sigma$ and $\Omega$, $\mathbf{x}_{\mathrm{t}}, \mathbf{x}_{\mathrm{r}}$ as a feasible pair of transmit and receive arrays. Initialize $\text{PSL}=0$.}
        \ENSURE{ $\mathbf{x}_{\mathrm{t}},\mathbf{x}_{\mathrm{r}}$ with low PSLs in DOA and DOD images.}
        \STATE  Calculate  $\text{PSL}_{\mathrm{t}}, \text{PSL}_{\mathrm{r}}$, and determine $\text{PSL}_{0}$. 
    \WHILE{$\text{PSL}_{0}< \text{PSL}$ }
    \STATE $\text{PSL}=\text{PSL}_{0}$
    \FOR{ $m = 2, \cdots, M-1$}
        \FOR{$x_{\mathrm{t},m} \in \mathcal{X}_{\mathrm{t},m}$}
        \IF{$L > \sigma MN$}
        \STATE Calculate $\text{PSL}_{\mathrm{t}}$ and store it.
        \ENDIF
        \ENDFOR
      \STATE  Set $\text{PSL}_{\mathrm{t}}^{\text{new}}$ to be the minimum $\text{PSL}_{\mathrm{t}}$ and update $x_{\mathrm{t},m}$ to the corresponding position.
     \ENDFOR
    % \FOR{ $n = 2, \cdots, N-1$}
    %     \FOR{$x_{\mathrm{r},n} \in \mathcal{X}_{\mathrm{r},n}$}
    %     \IF{$L > \sigma MN$}
    %     \STATE Calculate the $\text{PSL}_{\mathrm{r}}$ and store it.
    %     \ENDIF
    %     \ENDFOR
    %   \STATE  Set the $\text{PSL}_{\mathrm{r}}^{\text{new}}$ to be the minimum $\text{PSL}_{\mathrm{r}}$ and update $x_{\mathrm{r},n}$ to the corresponding position.
    %  \ENDFOR
    \STATE Repeat 4 to 11 for $\mathbf{x}_{\mathrm{r}}$ and obtain $\text{PSL}_{\mathrm{r}}^{\text{new}}$.
     \STATE $\text{PSL}_{0} = \max\{ \text{PSL}_{\mathrm{t}}^{\text{new}},\text{PSL}_{\mathrm{r}}^{\text{new}}\}$
    \ENDWHILE
    \end{algorithmic}
\end{algorithm}

In this section, we convert (\ref{opt}) into an antenna selection problem that can be solved using the CD framework. Substituting (\ref{spatial}) into (\ref{opt}) and utilizing the properties of the Kronecker product yields
\begin{equation} \label{obj}
\max_{\theta_{\mathrm{t}},\theta_{\mathrm{r}} \in \Omega} \frac{1}{MN} | \mathbf{1}^{T} \mathbf{a}_{\mathrm{t}}(\theta_{\mathrm{t}}) \cdot \mathbf{1}^{T} \mathbf{a}_{\mathrm{r}}(\theta_{\mathrm{r}}) | .
\end{equation}
Using the Cauchy-Schwarz inequality, we get 
\begin{equation}\label{ieq2}  \mathbf{1}^{T}\mathbf{a}_{\mathrm{t}}(\theta_{\mathrm{t}}) \leq M, \quad \mathbf{1}^{T}\mathbf{a}_{\mathrm{r}}(\theta_{\mathrm{r}}) \leq N.
\end{equation}
Employing (\ref{ieq2}), we can equivalently express (\ref{obj}) as
\begin{equation}
        \max \{ \max_{\theta_{\mathrm{t}}\in \Omega_{\mathrm{t}}} \frac{1}{M} | \mathbf{1}^{T}\mathbf{a}_{\mathrm{t}}(\theta_{t}) | , \max_{\theta_{\mathrm{r}}\in \Omega_{\mathrm{r}}} \frac{1}{N} | \mathbf{1}^{T}\mathbf{a}_{\mathrm{r}}(\theta_{\mathrm{r}})|    \}  ,
\end{equation}
where $\Omega_{\mathrm{t}}$ and $\Omega_{\mathrm{r}}$ denote the sidelobe regions of the DOD and DOA beampatterns, respectively. Next, we consider the constraint variables of the optimization problem (\ref{opt}). The resolutions $\delta_{\mathrm{t}}$ and $\delta_{\mathrm{r}}$ for DOD and DOA, respectively, can be approximately calculated using the half-wavelength normalized aperture $D_{\mathrm{t}}$ and $D_{\mathrm{r}}$ of the transmit and receive arrays as \cite{Sun2020Magazine}:
\begin{equation}
\begin{split}
     \delta_{\mathrm{t}} = 2\arcsin \left ( \frac{2.8}{\pi D_{\mathrm{t}}} \right), \\
    \delta_{\mathrm{r}} = 2\arcsin \left( \frac{2.8}{\pi D_{\mathrm{r}}} \right),
\end{split}
\end{equation}
To satisfy the resolution constraints, we fix the last elements of the transmit and receive arrays, i.e., $x_{\mathrm{t},M} = D_{\mathrm{1}} $ and $x_{\mathrm{r},N} = D_{\mathrm{2}}$, where $D_{1} >  5.6 / (\pi \sin\delta_{\mathrm{t}}) $ and $D_{2} >  5.6/ (\pi \sin\delta_{\mathrm{r}}) $. Since the performance of the DOD and DOA dimensions can be separately optimized, we use the CD framework to convert the original problem (\ref{opt}) into two subproblems. 
% We initialize a feasible pair of transmit and receive arrays, maintaining the fixed locations of the receive array. 
In the first subproblem, we fix the receive array, and select one element from the transmit array, excluding the first and last elements, as a variable. The locations of the other transmit array elements remain unchanged. This leads to Subproblem 1:
\begin{equation}
\begin{split}
% \textbf{subproblem \ 1:} 
% \\
& \min_{x_{\mathrm{t},m}} \quad \max_{\theta_{\mathrm{t}} \in \Omega_{\mathrm{t}}} \frac{1}{M}|\mathbf{1}^{T}\mathbf{a}_{\mathrm{t}}(\theta_{\mathrm{t}})|
\\
&\quad \quad \quad {\rm s.t.} \quad x_{\mathrm{t},m} \in \mathcal{X}_{\mathrm{t},m}, L > \sigma MN,
\\
&\quad \quad \quad \quad \quad \mathcal{X}_{\mathrm{t},m} = \{0,\cdots,D_{1}\} \backslash \{ \mathbf{x}_{\mathrm{t},\sim m} \}, 
% \\
% &\quad \quad \quad \quad \quad L > \sigma MN,
\end{split}
\end{equation}
where $x_{\mathrm{t},m}$ denotes the current variable of interest, and $\mathcal{X}_{\mathrm{t},m}$ represents the set of available locations for $x_{\mathrm{t},m}$. Similarly, we apply the same procedure to the receive array, resulting in Subproblem 2 as:
% Once the iteration over the transmit array elements is completed, then, we apply the same procedure to the receive array, resulting in subproblem 2 as
\begin{equation}
\begin{split}
% \textbf{subproblem \ 2:} 
% \\
& \min_{x_{\mathrm{r},n}} \quad  \max_{\theta_{\mathrm{r}} \in \Omega_{\mathrm{r}}} \frac{1}{N}|\mathbf{1}^{T}\mathbf{a}_{\mathrm{r}}(\theta_{\mathrm{r}})|
\\
&\quad \quad \quad {\rm s.t.} \quad x_{\mathrm{r},n} \in \mathcal{X}_{\mathrm{r},n}, L > \sigma MN,
\\
&\quad \quad \quad \quad \quad \mathcal{X}_{\mathrm{r},n} = \{0,\cdots,D_{2}\} \backslash \{ \mathbf{x}_{\mathrm{r},\sim n} \},  
\\
% &\quad \quad \quad \quad \quad L > \sigma MN,
% \\
\end{split}
\end{equation}
where $x_{\mathrm{r},n}$ denotes the current variable of interest, and $\mathcal{X}_{\mathrm{r},n}$ represents the set of available locations for $x_{\mathrm{r},n}$. It can be seen that the optimization of the original problem (\ref{opt}) can be achieved by alternatingly optimizing Subproblems 1 and 2 in a cyclic algorithm. Given the desirable parameters $\delta_{1},\delta_{2}$ and $\sigma$, we begin with initializing a feasible pair of transmit and receive arrays. We calculate the PSL values related to the transmit and receive arrays, denoting them as $\text{PSL}_{\mathrm{t}}$ and $\text{PSL}_{\mathrm{r}}$, respectively. The larger of $\text{PSL}_{\mathrm{t}}$ and $\text{PSL}_{\mathrm{r}}$ is referred as $\text{PSL}_{0}$. For the transmit array, we cyclically update the elements from $x_{\mathrm{t},2}$ to $x_{\mathrm{t},M-1}$. For each element, we sequentially place it in the available locations and calculate a new $\text{PSL}_{\mathrm{t}}$. If the new $\text{PSL}_{\mathrm{t}}$ is lower than the previous value, we adjust the position of the element to the new one until $\text{PSL}_{\mathrm{t}}$ is minimized. Subsequently, an identical procedure is performed on the receive array until $\text{PSL}_{\mathrm{r}}$ is minimized. The larger of the minimized $\text{PSL}_{\mathrm{t}}$ and $\text{PSL}_{\mathrm{r}}$ is the updated $\text{PSL}_{0}$. If the new value of $\text{PSL}_{0}$ reduces, we repeat the entire aforementioned process until $\text{PSL}_{0}$ no longer decreases. The proposed steps are summarized in Algorithm 1. Note that Algorithm 1 can be executed in parallel with different initializations to improve the computational efficiency.
% Given $\delta_{1},\delta_{2}, \sigma$, We begin with initializing a feasible pair of transmit and receive arrays.
% Since the beampattern of the transmit and receive arrays are both symmetric about $0^{\circ}$, we only need to calculate for $\theta_{\mathrm{t}}$ and $\theta_{\mathrm{r}}$ in the range of $[0^{\circ},90^{\circ}]$. We divide both $\theta_{\mathrm{t}}$ and $\theta_{\mathrm{r}}$ into $K$ angular grid points. Then, we calculate initial PSL of transmit array. 

% \begin{figure}[h]
%     \centering
%     \subfloat[]{
%     \includegraphics[width = 0.465\linewidth]{image/trans_config.eps}
%     }\hfil
%     \subfloat[]{
%     \includegraphics[width = 0.465\linewidth]{image/receive_config.eps}
%     }\hfil
%     \\
%     \subfloat[]{
%     \includegraphics[width = 0.465\linewidth]{image/virtual_config.eps}
%     }\hfil
%     % \subfloat{
%     % \includegraphics[width = 0.465\linewidth]{image/DOA_DOD.eps}
%     % }\hfil
%     \caption{Optimal MIMO SLA beampattern}
%     \label{fig:enter-label}
% \end{figure}

\section{Numerical Examples}
\label{sec:typestyle}

We now provide several numerical examples to evaluate the performance of our SLA design for MIMO radar in multipath scenarios. Note that all the PSL values are verified using the analytical solutions from \cite{Lin2022SP}.
\par Consider $M=10$ transmit and $N=10$ receive antennas, with the requirement that the DOD and DOA resolutions are both higher than $\delta_{1} = \delta_{2} =  2^{\circ}$. Consequently, we set both the transmit and receive array apertures as $D_{1}=D_{2}=60$ half-wavelengths. The grid points for $\theta_{\mathrm{t}}$ and $\theta_{\mathrm{r}}$ are both set to $K=1000$. The parameter $\sigma$, which determines the number of distinct elements in the virtual array, is set to $0.7$.
% \vspace{-0.8cm}
\par We randomly initialize $2000$ sets of independent and feasible transmit and receive array locations, and then use Algorithm 1. Then we select the transmit and receive arrays corresponding to the minimum PSL as the final optimized solution. The optimized positions of the MIMO SLA including transmit and receive SLAs, as well as the locations of the  virtual SLA elements including the corresponding weight vector, are shown in Fig. \ref{fig2}. 
\begin{figure}[h]
\centering
\includegraphics[scale=0.5]{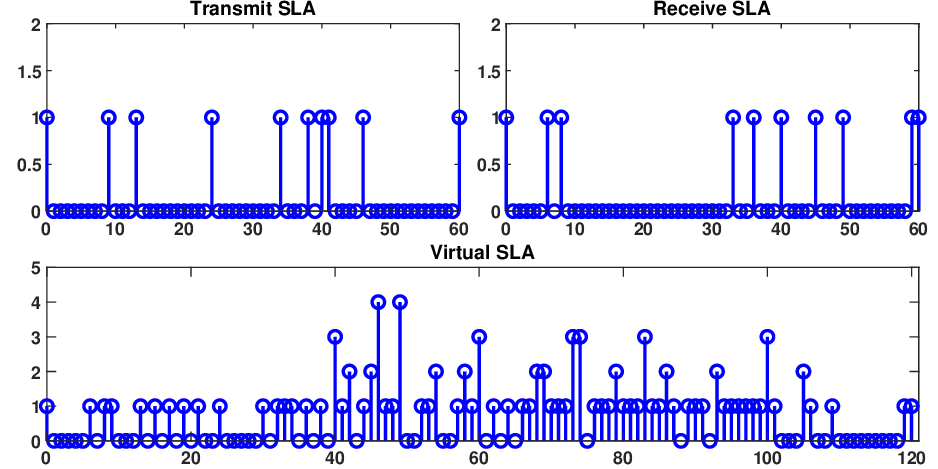}
% \vspace{-0.8cm}
\caption{Optimized MIMO SLA configurations}
\label{fig2}
\end{figure}
\par The DOA versus DOD image of a multipath target with DOD=$-20^{\circ}$ and DOA=$30^{\circ}$ is plotted in Fig. \ref{fig3}\subref{3a}. To verify the effectiveness of the algorithm, we compare the PSL values of the initial random MIMO SLAs and the optimal MIMO SLA. It can be observed that the optimized PSL is superior to all the initial values in Fig. \ref{fig3}\subref{3b}.
\begin{figure}[h]
    \centering
    \subfloat[]{
      \includegraphics[width = 0.465\linewidth]{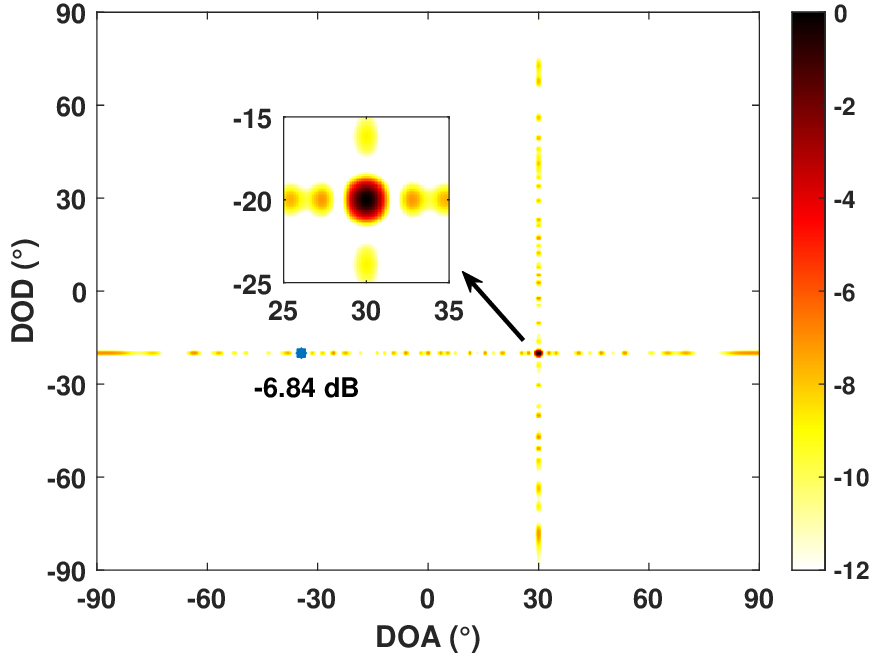} 
    \label{3a} }\hfil
    \subfloat[]{
     \includegraphics[width = 0.465\linewidth]{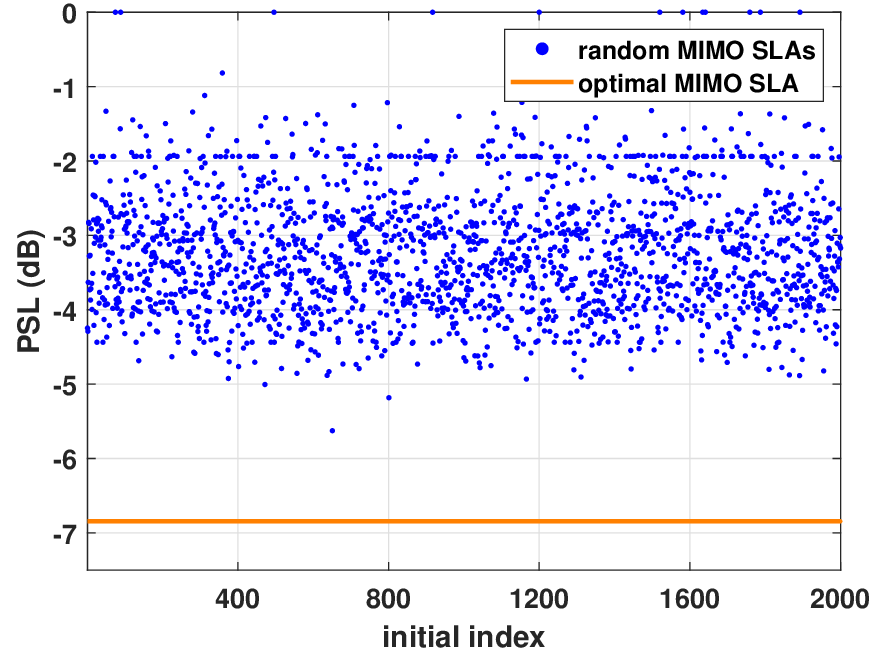}
    \label{3b} } \hfil
    \caption{\protect\subref{3a} DOA versus DOD image. \protect\subref{3b}  Comparison of PSL between optimized MIMO SLA and random MIMO SLAs.}
    \label{fig3}
    % \vspace{-0.3cm}
\end{figure}
% \vspace{-0.3cm}
% \begin{figure}[htbp]\label{fig:arr}
% \centering
% \includegraphics[scale=0.5]{image/MRA.eps}
% \caption{MRA configuration}
% \label{figure}
% \end{figure}

Next, we compare the performance of the optimized MIMO SLA, referred to as SLA$^{\star}$, with well-known sparse arrays including the minimum redundancy array (MRA) and nested array (NA) \cite{Moffet1968TAP,Shi2021TAES}. The numbers of elements of both transmit and receive arrays are 10 for all configurations. For the NA, both the inner ULA and the outer ULA have 5 array elements. We focus on the metrics including the PSL of the DOA versus DOD image, $\delta_{\mathrm{t}}$, $\delta_{\mathrm{r}}$, and the DF of the virtual array, as shown in Table.\ref{tabel1}. It can be seen that our optimized
MIMO SLA outperforms the others in terms of all metrics. Specifically, with $\sigma = 0.7$, the upper bound of DF loss is approximately $9.87$ dB, while the actual DF loss of the optimized virtual array is $2.55$ dB. This result is consistent with the explanation in Remark \ref{remark1}.

% \begin{figure}[H]
%     \centering
%     \includegraphics[scale = 0.6]{image/random_vs_opt.eps}
%     \caption{Caption}
%     \label{fig:enter-label}
% \end{figure}
\begin{table}[t] 
    \centering
    \caption{DOA versus DOD imaging performance of different arrays}
    \renewcommand{\arraystretch}{1.3}
    \scalebox{1.0}{
    \begin{tabular}{c|cccc}
          % \toprule
          \hline
    \bf{Arrays} & \bf{PSL} (dB)& $\boldsymbol{\delta}_{\mathrm{t}} (^{\circ})$  & $\boldsymbol{\delta}_{\mathrm{r}} (^{\circ}) $ & \bf{DF} (dB) \\
    % \midrule
    \hline
    MRA &  -5.86 &  2.10 &  2.10 & 16.38\\
    NA &   -5.32 & 3.08 & 3.08  &  15.09\\
    \bf{SLA$^{\star}$} &  \bf{-6.84}  & \bf{1.66}  & \bf{1.40}  & \bf{17.45} \\
     % \bottomrule
     \hline
    \end{tabular}
        }
\label{tabel1}
\end{table}
\section{Conclusion}
\label{conclusion}
We have established several new criteria for MIMO radar SLA designs with high angular resolution and low PSL in multipath scenarios. Based on these criteria, we have formulated a novel optimization problem and derived the maximum potential DF loss of the optimized virtual SLA.  
We have presented a cyclic algorithm based on the CD framework to solve the optimization problem.
The effectiveness of the proposed algorithm has been verified using numerical examples.
% The new optimization problem has been addressed by the proposed cyclic algorithm based on CD framework.
% We have presented a new optimization problem to obtain a MIMO radar SLA with high DOA and DOD resolutions, as well as low PSL of DOA versus DOD image, suitable for multipath scenarios. Based on the new optimization criteria, we have derived an upper bound on the DF loss of the virtual SLA. This problem has been addressed using the proposed cyclic algorithm. Simulation results have been provided to verify our analysis.
% We have established the optimization problem for minimizing the PSL of the DOA versus DOD imaging based on the new criteria. We have adopted a cyclic algorithm to address this problem, and simulation results have shown its effectiveness.
% \section*{ \centering Appendix}
\begin{appendix}
    The DF of the (5) is defined as
\begin{equation}\label{DF}
    \mathrm{DF} = \frac{P(0^{\circ},0^{\circ})^2}{\int_{-90^{\circ}}^{90^{\circ}}P(\theta,0^{\circ})^2 \cos\theta d\theta}.
\end{equation}
According to \cite{Lin2022SP}, (\ref{DF}) can be expressed as
\begin{equation}\label{DF1}
    \mathrm{DF} = \frac{| \sum_{l=1}^{L}w_{l} |^2}{\sum_{l=1}^{L}w_{l}^{2}} = \frac{(MN)^2}{\sum_{l=1}^{L}w_{l}^{2}}.
\end{equation}
Then the maximum DF value is achieved only when $w_{l} = 1$ for $l=1,\cdots,L$, as $\mathrm{DF}_{\mathrm{max}} = MN$. Assume that there are only $\sigma MN$ elements equal to 1 in the weighting vector $\mathbf{w}$. For the denominator of (\ref{DF1}), because $w_{l}\geq 0$, we have 
\begin{equation}\label{ieq}
    \sum_{l=1}^{L} w_{l}^{2} \leq \sigma MN + (1-\sigma)^2(MN)^2.
\end{equation}
Combining (\ref{DF1}) and (\ref{ieq}) yields:
\begin{equation}
    \frac{\mathrm{DF}(\sigma)}{\mathrm{DF}_{\mathrm{max}}} \geq \frac{1}{\sigma+(1-\sigma)^2(MN)}.
\end{equation}
The proof is complete.
\end{appendix}
\label{appendix}

\vfill\pagebreak

% \section{REFERENCES}
% \label{sec:refs}

% References should be produced using the bibtex program from suitable
% BiBTeX files (here: strings, refs, manuals). The IEEEbib.bst bibliography
% style file from IEEE produces unsorted bibliography list.
% -------------------------------------------------------------------------
\bibliographystyle{IEEEbib}
% \begin{thebibiliography}

% \end{thebibiliography}

% \bibliography{strings}

\end{document}